\theoremstyle{break}
\newtheorem{theorem}{Theorem}
\newtheorem{proposition}{Proposition}
\newtheorem{corollary}{Corollary}
\newtheorem{lemma}{Lemma}
\newcommand{\rank}[1]{\mathrm{rank}(#1)}
\renewcommand{\flat}{\mathrm{flat}} 
\newcommand{\hrho}{\hat{\rho}}
\newcommand{\mat}[1]{\left[\begin{matrix}#1\end{matrix}\right]}
\newcommand{\R}{{\mbox{\tiny$\mathtt R$}}}
\newcommand{\Y}{\mbox{\tiny$\mathtt Y$}}
\def\moverlay{\mathpalette\mov@rlay}
\def\mov@rlay#1#2{\leavevmode\vtop{%
   \baselineskip\z@skip \lineskiplimit-\maxdimen
   \ialign{\hfil$\m@th#1##$\hfil\cr#2\crcr}}}
\newcommand{\charfusion}[3][\mathord]{
    #1{\ifx#1\mathop\vphantom{#2}\fi
        \mathpalette\mov@rlay{#2\cr#3}
      }
    \ifx#1\mathop\expandafter\displaylimits\fi}
\title{Parsimony and the rank of a flattening matrix}
\author[1]{Jandre Snyman}
\author[2]{Colin Fox}
\author[3]{David Bryant}
\affil[1]{\small Department of Mathematics and Statistics, University of Otago, Dunedin, New Zealand.}
\affil[2]{\small Department of Physics, University of Otago, Dunedin, New Zealand.}
\affil[3]{\small Department of Mathematics and Statistics, University of Otago, Dunedin, New Zealand.  email: {\tt david.bryant@otago.ac.nz} phone: {\tt +64 34797889}. (Corresponding author)}
\date{\today}
\begin{document}
\maketitle
\begin{abstract}
The standard models of sequence evolution on a tree determine probabilities for every character or site pattern. A flattening is an arrangement of these probabilities into a matrix, with rows corresponding to all possible site patterns for one set $A$ of taxa and columns corresponding to all site patterns for another set $B$ of taxa. Flattenings have been used to prove difficult results relating to phylogenetic invariants and consistency and also form the basis of several methods of phylogenetic inference. We prove that the rank of the flattening equals $r^{\ell_T(A|B)}$, when $T$ is binary, $r$ is the number of states and $\ell_T(A|B)$ is the parsimony length of the binary character separating $A$ and $B$. A similar formula holds for non-binary trees. These results correct an earlier published formula.\\

\noindent Since completing this work, we have learnt that an equivalent result has been proved much earlier by Casanellas and Fern\'andez-S\'anchez \cite{CasanellasFernandez-Sanchez11}, using a completely different proof strategy.\end{abstract}

\section{Introduction}

Behind any statistical inference in phylogenetics is a model of describing the evolution of the states (alleles/nucleotides/amino acids) observed at each site in the alignment. Under the standard model, this evolution is determined by three types of parameters: the phylogeny itself, the distribution of the state at the root and the transition probabilities along each edge. Together these generate the joint distribution for the state at each leaf, which in turn corresponds to a column  of the alignment (reviewed in \cite{BryantGaltierEtal05,Felsenstein04}).

\begin{figure}[ht]
\centerline{\includegraphics[width=0.7\textwidth]{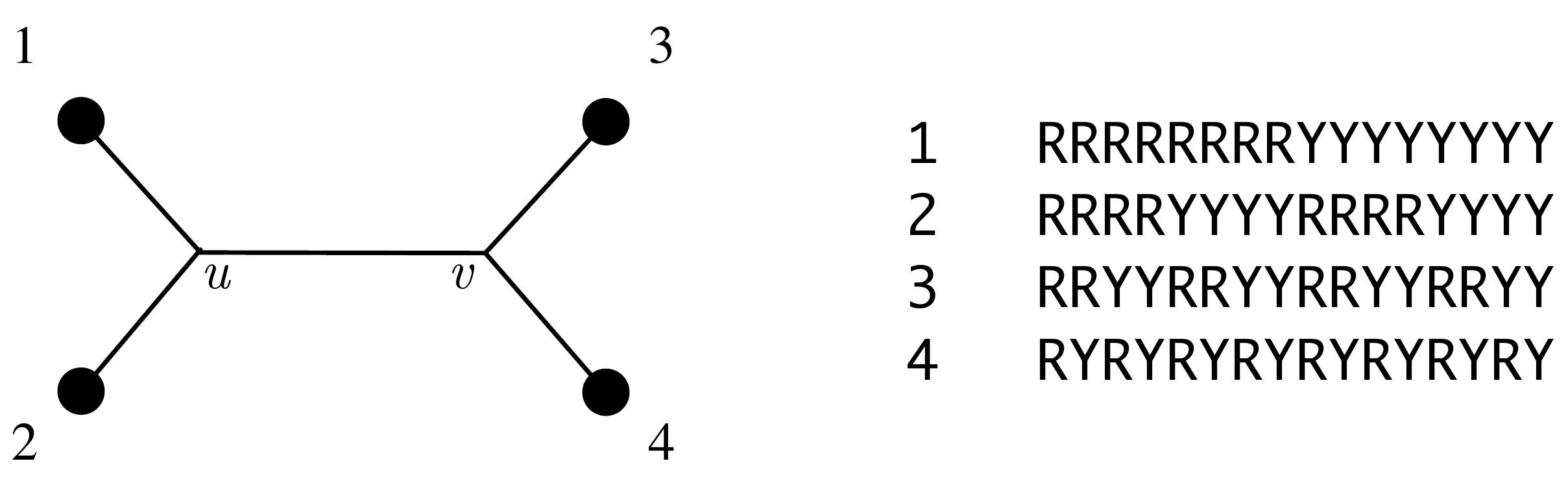}}
\caption{\label{firstFlat} An example of patterns or sites on a four taxa tree. The 16 possible patterns are listed in the table on the right - any one of these could be the pattern for a particular site.}
\end{figure}

An assignment of states to the taxa is often called a {\em pattern}. If there are $r$ states (say $r=2$ for binary, $r=4$ for nucleotide, $r=20$ for amino acids) and $n$ taxa then there are $r^n$ possible patterns. In the example in Figure~\ref{firstFlat} we have $r=2$ and $n=4$, so $r^n = 16$ possible pattern for the joint distribution. Let $p_{i_1i_2i_3i_4}$ denote the joint probability of pattern $i_1i_2i_3i_4$, that is the probability under the model of observing state $i_1$ at leaf $1$, $i_2$ at leaf $2$ and so on. We can think of these as elements of an $r^n$ dimensional vector
\begin{equation}
\mat{p_{\R\R\R\R}\\p_{\R\R\R\Y}\\p_{\R\R\Y\R}\\ \vdots \\ p_{\Y\Y\Y\Y}}\label{pvector}
\end{equation}
or indeed as a $2 \times 2 \times 2 \times 2$ multidimensional array or tensor. Alternatively we can reshape the vector into a matrix, such as 
\begin{equation}
 \mat{p_{\R\R\R\R} & p_{\R\R\R\Y} & p_{\R\R\Y\R} & p_{\R\R\Y\Y} \\
p_{\R\Y\R\R} & p_{\R\Y\R\Y} & p_{\R\Y\Y\R} & p_{\R\Y\Y\Y} \\
p_{\Y\R\R\R} & p_{\Y\R\R\Y} & p_{\Y\R\Y\R} & p_{\Y\R\Y\Y} \\
p_{\Y\Y\R\R} & p_{\Y\Y\R\Y} & p_{\Y\Y\Y\R} & p_{\Y\Y\Y\Y} }.\label{flat4i}
\end{equation}
In this matrix, the rows correspond to the $r^2$ possible ways of assigning states to the first two taxa while the columns correspond to the $r^2$ possible ways of assigning states to the third and fourth taxa.  A matrix of this form is called a {\em flattening}. In tensor terminology, the flattening is an example of an {\em unfolding} of the tensor of pattern probabilities. The idea was introduced into phylogenetics by Pachter and Sturmfels \cite{PachterSturmfels04} and developed extensively by Allman and Rhodes to solve a wide range of mathematical problems in phylogenetics  \cite{RhodesSullivant12,AllmanJarvisEtal13,AllmanRhodes03,AllmanRhodes05,AllmanRhodes06,AllmanRhodes07,AllmanRhodes08,AllmanRhodes09,AllmanRhodesEtal14}.

We can construct a flattening for any partition $A|B$ of the set of taxa into two non-empty parts. The rows of the flattening are indexed by all $r^{|A|}$ ways of assigning a state to the taxa in $A$ and the columns are indexed by all $r^{|B|}$ ways of assigning a state to the taxa in $B$. Each entry equals a term $p_{i_1i_2\cdots i_n}$ with each state $i_k$ determined by the row if $k \in A$ and by the column if $k \in B$. We denote this matrix $\flat_{A|B}$. The matrix in \eqref{flat4i} is $\flat_{\{1,2\}|\{3,4\}}$ and corresponds to the split $\{1,2\} | \{3,4\}$. The flattening for split $\{1,3\}|\{2,4\}$ is 
\[ \flat_{\{1,3\}|\{2,4\}}  = \mat{p_{\R\R\R\R} & p_{\R\R\R\Y} & p_{\R\Y\R\R} & p_{\R\Y\R\Y} \\
p_{\R\R\Y\R} & p_{\R\R\Y\Y} & p_{\R\Y\Y\R} & p_{\R\Y\Y\Y} \\
p_{\Y\R\R\R} & p_{\Y\R\R\Y} & p_{\Y\Y\R\R} & p_{\Y\Y\R\Y} \\
p_{\Y\R\Y\R} & p_{\Y\R\Y\Y} & p_{\Y\Y\Y\R} & p_{\Y\Y\Y\Y} }\]
and the flattening for $\{1\}|\{1,2,3\}$ is
\[\flat_{\{1\}|\{2,3,4\}} = \mat{p_{\R\R\R\R} & p_{\R\R\R\Y} & p_{\R\R\Y\R} & p_{\R\R\Y\Y} & p_{\R\Y\R\R} & p_{\R\Y\R\Y} & p_{\R\Y\Y\R} & p_{\R\Y\Y\Y} \\
p_{\Y\R\R\R} & p_{\Y\R\R\Y} & p_{\Y\R\Y\R} & p_{\Y\R\Y\Y} & p_{\Y\Y\R\R} & p_{\Y\Y\R\Y} & p_{\Y\Y\Y\R} & p_{\Y\Y\Y\Y} }.\]

An important property of flattenings from phylogenies is their rank. Suppose that $e$ is an edge in a phylogeny. Removing $e$ partitions the tree, and hence the set of leaves, into two parts, inducing a partition $A|B$ of the set of taxa. We say that $A|B$ is a split of the tree corresponding to edge $e$. Allman and Rhodes \cite{AllmanRhodes07} (Proposition 11) proved that, under minor assumptions, if $A|B$ is a split in the tree then the rank of $\flat_{A|B}$ is at most $r$, while if $A|B$ is {\em not} a split of the tree then the rank of $\flat_{A|B}$ is at least $r^2$. 

Because of this property, flattenings and their ranks have played a prominent role in the mathematics of phylogenetics, particularly with respect to the development and construction of {\em phylogenetic invariants} \cite{AllmanRhodes08,AllmanRhodes05,PachterSturmfels04}. Roughly speaking, a phylogenetic invariant for a tree is a function on vectors of pattern probabilities such as \eqref{pvector} which is zero when the probability distribution comes from that tree and non-zero otherwise. 

Flattenings have also led to methods for inferring phylogenies directly. In an  original and influential chapter, Eriksson \cite{Eriksson05} outlines an efficient method for inferring phylogenies with few assumptions about the evolutionary process. The SVD quartets method \cite{ChifmanKubatko14} uses flattenings to infer trees for subsets of four taxa, subsequently assembling these four-taxa trees into one for the complete set of taxa. The method is statistically consistent even in the presence of incomplete lineage sorting. Quartet-based approaches based on flattening have also been developed by \cite{Fernandez-SanchezCasanellas16,CasanellasFernandez-SanchezEtal21}. Allman, Kubako and Rhodes propose the ranks of flattenings as a measure of support for different edges in a phylogeny  \cite{AllmanKubatkoEtal17}.\\~\\

In this paper we extend and complete the theorem of Allman and Rhodes and derive an exact formula for the rank of arbitrary flattenings of a tree. Our result corrects a formula appearing in \cite{Eriksson05}. We show that the rank of $\flat_{A|B}$ is given by $r^{\ell_T(A|B)}$, where $\ell_T(A|B)$ is the parsimony length of the binary character which assigns one state to all leaves in $A$ and another state to all leaves in $B$. A slightly modified result holds for non-binary (multifurcating) trees. 

In the next section we present the notation and basic results needed to establish this result. We note that we generalise from characters to maps on arbitrary subsets of vertices, and define our concepts in this more general context. The main result is proved in a series of lemmas and propositions in the final section. ~\\

Since completing this work, we have learnt that an equivalent result has been proved much earlier by Casanellas and Fern\'andez-S\'anchez \cite{CasanellasFernandez-Sanchez11}, using a completely different proof strategy.

\section{Background}

\subsection{Trees and characters}

An {\em unrooted phylogeny} is an acyclic, connected graph $T = (V,E)$ with leaf set (taxon set) $L(T)$. We say that $T$ is binary (full resolved) if every non-leaf vertex has degree three.  In a {\em rooted phylogeny} $T = (V,E_\rho)$ one vertex is selected as the root $\rho$ and edges are directed away from $\rho$. The rooted phylogeny $T$ is binary if every non-leaf vertex has out-degree $2$.

A {\em character} is a function $f$ from $L(T)$ to a set of states $[r] = \{1,2,\ldots,r\}$. Hence $r=4$ for nucleotides and $r=20$ for amino acids. We will consider a more general situation where $f$ is a map from any subset of $V$ to $[r]$, not just the set $L(T)$. 

The {\em length} of a function $F:V \rightarrow [r]$ is defined as
\[\ell_T(F) = |\left\{ \{u,v\} \in E: F(u) \neq F(v) \right\}|\]
and the (parsimony) length a  function $f$ with domain $A \subseteq V$ is the length of a minimum extension
\[\ell_T(f) = \min\{ \ell_T(F) : F|_A = f \}.\]

The length of a function or character can be expressed equivalently using vertex and edge cuts. For $E' \subseteq E$ or $V' \subseteq V$ we let $T \setminus  E'$ and $T \setminus V'$ denote the graphs resulting from deleting $E'$ or $V'$ respectively. 
The length $\ell_T(f)$ equals the minimum cardinality of an edge cut  $E' \subseteq E$ such that $f(u) = f(v)$ whenever $u$ and $v$ are in the same component of $T\setminus E'$.  In a similar way, we let $\nu_T(f)$ denote the minimum cardinality of a vertex cut  $V' \subseteq V$ such that 
$f(u) = f(v)$ whenever $u$ and $v$ are in the same component of $T \setminus V'$. For all $f$ we have $\nu_T(f) \leq \ell_T(f)$.

If $A$ and $B$ are disjoint subsets of $V$ then we let $\ell_T(A|B)$ and $\nu_T(A|B)$ denote $\ell_T(f)$ and $\nu_T(f)$, where $f$ is the function given by
\[f(x) = \begin{cases} 0 & \mbox{ if $x\in A$} \\ 1 & \mbox{ if $x \in B$.} \end{cases}\]

Steel \cite{Steel93,SempleSteel03} observed that the parsimony length  of a character with two states can be expressed in terms of the size of disjoint path sets . This applies for both  $\ell_T(A|B)$ and $\nu_T(A|B)$. 

\begin{proposition} \label{Menger}
Let $T = (V,E)$ be an unrooted phylogeny and let $A$ and $B$ be disjoint subsets of $V$. Then $\ell_T(A|B)$ equals the maximum cardinality of a set of  edge-disjoint paths connecting vertices in $A$ with vertices in $B$, while $\nu_T(A|B)$ equals the maximum cardinality of a set of vertex-disjoint paths connecting vertices in $A$ with vertices in $B$. \end{proposition}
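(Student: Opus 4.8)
The plan is to recognise both statements as instances of Menger's theorem applied to the tree $T$ with the two distinguished vertex sets $A$ and $B$: the first assertion will follow from the edge form of Menger's theorem and the second from its vertex (set-to-set) form. The only genuine work is to match the cut-based definitions of $\ell_T(A|B)$ and $\nu_T(A|B)$ to the standard notion of an $A$--$B$ separator, and then to supply the non-trivial inequality.

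First I would reinterpret the two cut quantities as separators. Writing $f$ for the function sending $A$ to $0$ and $B$ to $1$, an edge set $E'$ satisfies the defining condition for $\ell_T$ exactly when no component of $T \setminus E'$ contains both an $A$-vertex and a $B$-vertex, which is equivalent to saying that every path from a vertex of $A$ to a vertex of $B$ uses an edge of $E'$. Thus $\ell_T(A|B)$ is the minimum size of an $A$--$B$ \emph{edge separator}. The same reasoning shows that $\nu_T(A|B)$ is the minimum size of an $A$--$B$ \emph{vertex separator}, where here a separator is permitted to contain vertices of $A \cup B$ itself: deleting such a vertex simply removes it from every component, so it can only help to disconnect $A$ from $B$. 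Keeping track of this last point is what makes the vertex case line up with the set-to-set form of Menger's theorem rather than the internally-disjoint-paths form.

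The easy inequality is weak duality. Any single path from $A$ to $B$ must contain at least one edge of a given edge separator (respectively at least one vertex of a given vertex separator). Given $k$ pairwise edge-disjoint such paths, the edges they contribute are distinct, so every edge separator has size at least $k$; the vertex argument is identical, with vertex-disjointness forcing the contributed vertices to be distinct. Hence the maximum number of disjoint paths is at most the minimum separator size in both settings.

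The reverse inequality---that one can always realise as many disjoint paths as the minimum separator---is the substance of Menger's theorem and is where the main difficulty lies. I would either cite the edge and (set) vertex forms of Menger's theorem directly, or, to keep the argument self-contained, exploit the tree structure: since the path between any two vertices of a tree is unique, the required family can be built greedily and the bound argued by induction on $|E|$ (deleting a leaf edge and tracking how $A$, $B$, and the optimal separator change), or one can set up an auxiliary unit-capacity flow network---splitting each vertex into an in/out pair for the vertex version---and invoke max-flow--min-cut. The principal obstacles to watch are the bookkeeping in the vertex case, where a minimum separator may legitimately use vertices of $A$ and $B$, and the degenerate configurations in which an $A$-vertex is adjacent to a $B$-vertex.
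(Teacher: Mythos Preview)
The paper does not supply its own proof of this proposition: it is stated as an observation due to Steel, with citations to \cite{Steel93,SempleSteel03}, and left unproved. So there is no detailed argument in the paper to compare against. That said, the paper's choice of label \texttt{Menger} for the proposition makes it clear the authors have precisely your line of reasoning in mind.

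Your proposal is correct and essentially complete as a proof sketch. The identification of $\ell_T(A|B)$ with the minimum size of an $A$--$B$ edge separator and of $\nu_T(A|B)$ with the minimum size of an $A$--$B$ vertex separator follows directly from the cut-based definitions given in the paper, and you are right to flag that the vertex separator may use vertices of $A \cup B$ (this is exactly the set-to-set form of Menger rather than the internally-disjoint form). Weak duality is handled correctly, and for the converse either a citation of Menger's theorem or the tree-specific induction you outline would suffice. One small caution on your alternative self-contained route: the greedy/inductive argument on trees needs a little care in the vertex case when an element of $A$ (or $B$) is an internal vertex of $T$, since deleting a leaf edge need not touch $A \cup B$ at all; the max-flow reduction with vertex splitting avoids this bookkeeping cleanly.
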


\begin{corollary} \label{vertexEdge}
Let $T = (V,E)$ be a binary phylogeny and suppose $A$ and $B$ are disjoint subsets of $L(T)$. Then $\ell_T(A|B) = \nu_T(A|B)$.
\end{corollary}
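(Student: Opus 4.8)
The plan is to use the two Menger-type characterisations from Proposition~\ref{Menger} and reduce the claim to a purely combinatorial statement about path systems in a binary tree. The inequality $\nu_T(A|B) \le \ell_T(A|B)$ is already noted to hold in general; indeed, every family of vertex-disjoint paths is automatically edge-disjoint (paths sharing no vertex share no edge), so the maximum number of vertex-disjoint $A$--$B$ paths is at most the maximum number of edge-disjoint ones. Hence it suffices to prove the reverse inequality $\ell_T(A|B) \le \nu_T(A|B)$, and by Proposition~\ref{Menger} this amounts to showing that a maximum family of edge-disjoint paths joining $A$ to $B$ can be taken to be vertex-disjoint.

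The key observation I would exploit is that all endpoints of these paths lie in $A \cup B \subseteq L(T)$, hence are leaves, while every internal vertex of a binary tree has degree exactly $3$. Fix any collection $\mathcal{P}$ of edge-disjoint paths connecting $A$ to $B$. First, no leaf can be an interior vertex of a path (a leaf has degree $1$, so a path reaching it must terminate there), and a leaf that were an endpoint of two paths would force both to use its single incident edge, contradicting edge-disjointness; so each leaf lies on at most one path of $\mathcal{P}$. Next consider an internal vertex $v$. Since $v \notin A \cup B$, it is never an endpoint of a path, so any path through $v$ passes through and uses two of the three edges incident with $v$. If two edge-disjoint paths both passed through $v$, they would require four distinct incident edges, but $v$ has only three; hence at most one path of $\mathcal{P}$ meets $v$. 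Combining the two cases, every vertex of $T$ lies on at most one path of $\mathcal{P}$, so $\mathcal{P}$ is in fact vertex-disjoint.

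Applying this to a maximum edge-disjoint family shows that $\ell_T(A|B)$ edge-disjoint paths are automatically $\ell_T(A|B)$ vertex-disjoint paths, giving $\ell_T(A|B) \le \nu_T(A|B)$ and hence equality. I do not anticipate a serious obstacle here: the only points requiring care are the edge-counting at a degree-three vertex and the verification that endpoints genuinely cannot coincide with interior vertices of paths, both of which rest squarely on the binary assumption and on $A, B \subseteq L(T)$. (Were one to avoid Proposition~\ref{Menger} entirely, an equivalent route would translate directly between the minimum edge-cut and minimum vertex-cut formulations, replacing each cut edge by one of its endpoints, but the path-system argument above seems cleanest.)
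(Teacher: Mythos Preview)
Your proof is correct and follows essentially the same approach as the paper: both rest on the observation that leaf-to-leaf paths in a binary phylogeny are edge-disjoint if and only if they are vertex-disjoint, which combined with Proposition~\ref{Menger} gives the result. The paper states this equivalence in a single sentence without justification, whereas you have supplied the degree-counting details (degree~$1$ at leaves, degree~$3$ at internal vertices) that make it rigorous.
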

\begin{proof}
If $p_1$ and $p_2$ are any two paths in a binary phylogeny which begin and end at leaves then $p_1$ and $p_2$ are vertex disjoint if and only if they are edge disjoint.
\end{proof}

\subsection{Stochastic models on trees}

 The standard stochastic models on phylogenies determine the joint probabilities for the states at every vertex of the tree.  
Let $T = (V,E_\rho)$ be a rooted tree with root $\rho$. Let $\pi_\rho$ denote the root distribution  and for each directed edge $(u,v)$ we associate an $r \times r$ transition probability matrix $P_{uv}$. Let $X_v$ denote the random state associated with vertex $v$ and, for $A \subseteq V$, let $X_A$ denote the joint random variable  $X_a:a \in A$.  

The joint probability that  $X_v = F(v)$ for all $v \in V$ is 
\[\pi(X_V = F) = \pi_\rho(F(\rho)) \prod_{(u,v) \in E_\rho} P_{uv} (F(u),F(v)).\]
 The marginal probabilities for maps $f:A \rightarrow [r]$  on some subset $A \subseteq V$  are then given by
\[\pi(X_A = f) = \sum_{F:F|_A = f} \pi(F).\]
With this notation, the standard probability for a character $f:L(T) \rightarrow [r]$ is $\pi(X_{L(T)} = f)$.

We will make three assumptions about the parameters of the stochastic model.
\begin{enumerate}
\item[(C1)] The transition matrices $P_{uv}$ are non-singular.
\item[(C2)]  $\pi_\rho(i) >0$ for all $i \in [r]$.
\item[(C3)] $P_{uv}(i,j)>0$ for all edges $(u,v)$ and $i,j \in [r]$.
\end{enumerate}
We note that all three are satisfied in standard evolutionary models, where mutations along a branch are modelled using continuous time Markov chains.

These conditions have three immediate consequences.
\begin{proposition} \label{PairRank}
Suppose that conditions (C1) and (C2) are satisfied. Then
\begin{enumerate}
\item $\pi(X_v = i) >0$ for all $v \in V$ and $i \in [r]$.
\item For all $u,v \in V$ (not necessarily adjacent) the $r \times r$ matrix with entries $P_{uv} = \pi(X_v = j | X_u = i)$ is non-singular.
\item For all $u,v \in V$ the $r \times r$ matrix with entries $M_{ij} = \pi(X_u = i , X_v = j)$ is non-singular.
\end{enumerate}
\end{proposition}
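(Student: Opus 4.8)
The plan is to prove the three statements in sequence, since (2) uses (1) and (3) uses both. For statement (1), I would first note that the marginal law of a single vertex is obtained by propagation from the root: summing the joint probability over all vertices off the directed path $\rho = v_0, v_1, \ldots, v_k = v$ collapses each hanging subtree to $1$ (because every $P_{uv}$ is stochastic), leaving the row vector of marginals equal to $\pi_\rho^\top P_{v_0 v_1} \cdots P_{v_{k-1} v_k}$. The crucial observation is that multiplying a strictly positive probability row vector by a non-singular stochastic matrix again yields a strictly positive vector: were the $j$-th entry of the product to vanish, then, since all the weights are positive, the entire $j$-th column of the matrix would have to be zero, making it singular and contradicting (C1). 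Starting from $\pi_\rho > 0$ given by (C2) and inducting along the path establishes $\pi(X_v = i) > 0$ for all $v \in V$ and $i \in [r]$.

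For statement (2), fix $u, v$ and let $u = w_0, w_1, \ldots, w_m = v$ be the unique path joining them in $T$. Because deleting any internal $w_k$ separates $\{w_0, \ldots, w_{k-1}\}$ from $\{w_{k+1}, \ldots, w_m\}$ in the tree, the global Markov property makes $X_{w_0}, \ldots, X_{w_m}$ a Markov chain, so the conditional matrix $[\pi(X_v = j \mid X_u = i)]$ equals the ordered product $C^{(1)} C^{(2)} \cdots C^{(m)}$, where $C^{(k)}_{ab} = \pi(X_{w_k} = b \mid X_{w_{k-1}} = a)$; each of these conditionals is well defined because the conditioning marginals are positive by (1). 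Each edge $\{w_{k-1}, w_k\}$ is directed in $E_\rho$ one way or the other. If it runs from $w_{k-1}$ to $w_k$ then $C^{(k)} = P_{w_{k-1} w_k}$, non-singular by (C1). If it runs the other way, Bayes' rule gives $C^{(k)} = D^{-1} P_{w_k w_{k-1}}^\top D'$ with $D = \diag{\pi(X_{w_{k-1}})}$ and $D' = \diag{\pi(X_{w_k})}$ invertible by (1), so $C^{(k)}$ is again non-singular. A product of non-singular matrices is non-singular, which proves the claim (and covers $u = v$ trivially, where the matrix is the identity).

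Statement (3) then follows immediately by factoring the joint matrix as $M = \diag{\pi(X_u)} \cdot [\pi(X_v = j \mid X_u = i)]$, straight from the definition of conditional probability; the first factor is non-singular by (1) and the second by (2).

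I expect the only real obstacle to be in (2): one must handle a path that ascends toward the root before descending, so that some one-step conditionals are reversed transition matrices rather than the supplied $P_{uv}$. The Bayes reversal together with the strict positivity from (1) is precisely what keeps these reversed factors invertible; note that without (C2), and hence without (1), the diagonal rescalings could be singular and the argument would fail. Verifying the Markov-chain structure along an arbitrary tree path is routine, so the substance lies in combining non-singularity (C1) with positivity (C2).
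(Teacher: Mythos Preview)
Your proposal is correct and follows essentially the same approach as the paper: prove positivity of marginals, factor the conditional matrix $P_{uv}$ as a product of one-step conditionals along the path, and then write $M = \diag{\pi(X_u)} P_{uv}$. The paper is terser---it cites an external reference for part (1) and writes the product in part (2) without explicitly separating the forward and reversed edges---whereas you supply a self-contained positivity argument and spell out the Bayes reversal, but the underlying strategy is identical.
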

\begin{proof}
\begin{enumerate}
\item See \cite{SteelMA1994Rtws}, Theorem 2.
\item Let $v_1,v_2,\ldots,v_k$ be the path from $v_1 = u$ to $v_k = v$. Then 
\[P_{uv} = P_{v_1v_2} P_{v_2v_3} \cdots P_{v_{k-1}v_{k}},\]
which is non-singular by (C1).
\item Let $\Pi_u$ be the diagonal matrix with entries $\pi(X_u = i)$ on the diagonal. Then $M = \Pi P_{uv}$ so is non-singular by parts 1 and 2.
\end{enumerate}
\end{proof}

For any edge $(u,v)$ in the rooted tree $T=(V,E_\rho)$ we can define a transition matrix $P_{vu}$ for the reverse edge $(v,u)$ using Bayes' rule:
\begin{align}
P_{vu}(i,j) &=\pi(X_u = j|X_v = i) \nonumber \\
& = \pi(X_v = i|X_u = j) \frac{\pi(X_u=j)}{\pi(X_v=i)} \label{BayesFlip}\\
& = P_{uv}(j,i) \frac{\pi(X_u=j)}{\pi(X_v=i)}. \nonumber
\end{align}
This formula allows us to move the root to any vertex in the tree with no change to the  distribution of joint random variables $X_V$ \cite{AllmanRhodes03,SteelMA1994Rtws}. To move from the root from $\rho$ to $\hrho$ we make the new root distribution $\pi_{\hrho} = \pi(X_{\hrho})$ and use \eqref{BayesFlip} to determine transition matrices for all edges which are flipped when directing away from $\hrho$ instead of from $\rho$ (under the assumption that (C1) and (C2) both hold). 

Conditional independence for the variables $X_v$ are, as with any graphical model, determined by cuts in the graph. More specifically, if $A_1,A_2,\ldots,A_k$ are the components of $T\setminus\{v\}$ then $X_{A_1},X_{A_2},\ldots,X_{A_k}$ are conditionally independent, given $X_v$, see \cite[Lemma 7.1]{Steel16}. The following Proposition is a direct consequence of this result

\begin{proposition} \label{conditional}
If $V' \subseteq V$ and $A_1,A_2,\ldots,A_k$ are the components of $T\setminus\{V'\}$ then $X_{A_1},X_{A_2},\ldots,X_{A_k}$ are conditionally independent, given $X_{V'}$.
\end{proposition}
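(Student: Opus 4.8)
The plan is to prove this directly from the edge-factorization of the joint distribution, rather than by iterating the single-vertex result, since the factorization makes the independence transparent and handles an arbitrary cut set $V'$ in one step. We may assume $V'$ is nonempty and proper, as the remaining cases are vacuous: if $V' = \emptyset$ then $T \setminus V' = T$ is connected and the conclusion concerns a single block, and if $V' = V$ there are no components. First I would record that conditioning is well defined: by (C2) and (C3) every full assignment $F \colon V \to [r]$ has $\pi(X_V = F) > 0$, so every marginal $\pi(X_{V'} = g)$ is strictly positive and we may freely divide by it.

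The crux is a structural observation about the edges of $T$. Since $A_1, \ldots, A_k$ are the connected components of $T \setminus V'$, no edge of $T$ can join two distinct components, for such an edge would make its endpoints adjacent in $T \setminus V'$. Hence every edge lies in exactly one group: edges inside $V'$, edges inside a single $A_i$, or edges joining $V'$ to a single $A_i$; likewise the root $\rho$ lies either in $V'$ or in exactly one $A_i$. Writing the assignment $F$ as $F|_{V'} = g$ and $F|_{A_i} = f_i$, every factor of
\[\pi(X_V = F) = \pi_\rho(F(\rho)) \prod_{(u,v) \in E_\rho} P_{uv}(F(u), F(v))\]
that involves a vertex of $A_i$ depends only on $(g, f_i)$. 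Collecting these factors (together with the root factor when $\rho \in A_i$) into a function $\Phi_i(g, f_i)$, and collecting the remaining factors (edges inside $V'$, and the root factor when $\rho \in V'$) into a function $C(g)$, gives
\[\pi(X_V = F) = C(g) \prod_{i=1}^{k} \Phi_i(g, f_i).\]

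Finally I would marginalize and divide. Summing over $f_1, \ldots, f_k$ factorizes across blocks, yielding $\pi(X_{V'} = g) = C(g) \prod_i S_i(g)$ with $S_i(g) = \sum_{f_i} \Phi_i(g, f_i)$, so that
\[\pi(X_{A_1} = f_1, \ldots, X_{A_k} = f_k \mid X_{V'} = g) = \prod_{i=1}^{k} \frac{\Phi_i(g, f_i)}{S_i(g)}.\]
Each factor depends only on $f_i$ and the conditioning value $g$, and summing out the other blocks shows $\Phi_i(g,f_i)/S_i(g) = \pi(X_{A_i} = f_i \mid X_{V'} = g)$; this product form is exactly the asserted conditional independence.

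I expect the only real obstacle to be the bookkeeping in the grouping step: one must verify that the root factor and every boundary edge from $V'$ to some $A_i$ is assigned to the correct single block, which rests entirely on the no-cross-component-edge observation (and, pleasantly, requires no moving of the root). An alternative is induction on $|V'|$ invoking the cited single-vertex lemma, but there the subtlety shifts to checking that conditioning on a removed vertex leaves a valid tree model on each subcomponent to which the inductive hypothesis applies; the factorization avoids this complication.
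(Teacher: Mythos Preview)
Your proof is correct and takes a genuinely different route from the paper's. The paper's argument is a one-line recursion: it invokes the single-vertex separation lemma (Lemma~7.1 of \cite{Steel16}) at some $v \in V'$ and then applies the result recursively to the components, exactly the alternative you mention and set aside at the end. Your argument instead works directly from the edge-factorization of $\pi(X_V = F)$, groups factors by which block $A_i$ they touch, and reads off the product form of the conditional in one step. The paper's approach is terser but leans on an external lemma and leaves implicit the point you flag---that conditioning on $X_v$ leaves a valid tree model on each piece so that the recursion is legitimate. Your approach is self-contained, handles an arbitrary cut set $V'$ uniformly without induction, and makes the no-cross-component-edge observation (the real content of the separation statement) explicit; it also records the positivity of $\pi(X_{V'} = g)$ via (C2)--(C3), which the paper's proof does not mention but is needed for the conditioning to be well defined everywhere.
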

\begin{proof}
Apply Lemma 7.1 of \cite{Steel16} with respect to some $v \in V'$ and then apply the result recursively on the components.
\end{proof}

\subsection{The main theorem}

Let $T=(V,E)$ be a phylogeny and suppose that $A$ and $B$ are disjoint subsets of $V$. We let $\flat_{A|B}$ denote the matrix with rows indexed by maps $f_A:A \rightarrow [r]$, columns indexed by maps $f_B:B \rightarrow [r]$ and $(f_A,f_B)$ entry given by
\[\flat_{A|B}(f_A,f_B) = \pi(X_A = f_A, X_B = f_B).\]
Hence $\flat_{A|B}$ is an $r^{|A|} \times r^{|B|}$ non-negative matrix with entries which sum to $1$. 

\begin{theorem} \label{MainTheorem}
Let $T = (V,E)$ be a phylogeny and let $A$ and $B$ be disjoint subsets of $V$. Suppose that the root distribution and transition matrices satisfy (C1)---(C3). Then
\[ \rank{\flat_{A|B}} = r^{\nu_T(A|B)}.\]
If $T$ is binary and $A$ and $B$ subsets of $L(T)$ then
\[ \rank{\flat_{A|B}} = r^{\ell_T(A|B)}.\]
\end{theorem}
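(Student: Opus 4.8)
The plan is to prove the vertex-cut identity $\rank{\flat_{A|B}} = r^{\nu_T(A|B)}$ first; the binary statement then follows immediately, since Corollary \ref{vertexEdge} gives $\ell_T(A|B) = \nu_T(A|B)$ whenever $T$ is binary and $A, B \subseteq L(T)$. Write $\nu = \nu_T(A|B)$ and fix a minimum vertex cut $V'$ with $|V'| = \nu$, so that no component of $T \setminus V'$ meets both $A$ and $B$. Proposition \ref{conditional} then makes $X_A$ and $X_B$ conditionally independent given $X_{V'}$, and conditioning on the $r^\nu$ possible values $g \colon V' \to [r]$ gives $\flat_{A|B}(f_A,f_B) = \sum_g \pi(X_{V'}=g)\,\pi(X_A=f_A \mid X_{V'}=g)\,\pi(X_B=f_B \mid X_{V'}=g)$. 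In matrix form this reads $\flat_{A|B} = L D R$, where $L$ is the $r^{|A|} \times r^\nu$ matrix of conditionals $\pi(X_A = f_A \mid X_{V'}=g)$, $R$ is the $r^\nu \times r^{|B|}$ matrix of conditionals $\pi(X_B=f_B\mid X_{V'}=g)$, and $D = \diag{\pi(X_{V'}=g)}$. Since the inner dimension is $r^\nu$, this already yields the upper bound $\rank{\flat_{A|B}} \le r^\nu$.

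For the lower bound I would argue that this factorisation loses no rank. First, (C2)--(C3) force $\pi(X_V = F) > 0$ for every full assignment $F$, hence $\pi(X_{V'}=g) > 0$ for every $g$, so $D$ is a nonsingular diagonal matrix. It therefore suffices to choose $V'$ so that $L$ has full column rank $r^\nu$ and $R$ has full row rank $r^\nu$; then $\rank{LDR} = r^\nu$. To see that a good cut makes $L$ full rank, I would exhibit representatives $a_1, \dots, a_\nu \in A$, lying in $\nu$ distinct components of $T\setminus V'$, such that the component containing $a_k$ is adjacent to exactly one cut vertex $w_k$, with $V' = \{w_1,\dots,w_\nu\}$. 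Conditional independence (Proposition \ref{conditional}) then collapses $\pi(X_{a_k}=i_k \mid X_{V'}=g)$ to $\pi(X_{a_k}=i_k \mid X_{w_k}=g(w_k)) = P_{w_k a_k}(g(w_k), i_k)$, so the marginal of $L$ onto the coordinates $(X_{a_1},\dots,X_{a_\nu})$ is the Kronecker product $P_{w_1 a_1}\otimes \cdots \otimes P_{w_\nu a_\nu}$. Each factor is nonsingular by Proposition \ref{PairRank}(2), so this $r^\nu \times r^\nu$ marginal is nonsingular; as it is obtained from $L$ by summing rows, it forces $\rank{L} = r^\nu$, and the symmetric argument with representatives $b_k \in B$ gives $\rank{R} = r^\nu$.

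The main obstacle is the combinatorial step of producing a single minimum cut $V'$ that simultaneously admits such \emph{clean} representatives on both sides. The natural tool is Proposition \ref{Menger}: Menger's theorem supplies $\nu$ vertex-disjoint paths from $A$ to $B$, and I would take $a_k$ and $b_k$ to be the endpoints of the $k$th path and $w_k$ an interior vertex of it. The delicate point, visible already in small examples, is that a carelessly chosen cut can leave a representative's component adjacent to two cut vertices, destroying the tensor structure, so the cut vertices must be positioned correctly along the disjoint paths. I would handle this with an extremal choice of $V'$, for instance the minimum cut whose $A$-side is inclusion-minimal, together with an uncrossing argument showing that each representative then reaches $V'$ through a unique cut vertex; alternatively one can induct on $\nu$, conditioning on a single well-placed cut vertex to peel off one disjoint path and invoking multiplicativity of rank under the resulting Kronecker factorisation. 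Verifying that an extremal minimum cut has this single-gateway property for suitable representatives on both the $A$-side and the $B$-side is where the real work lies; everything else is bookkeeping with the factorisation $\flat_{A|B}=LDR$.
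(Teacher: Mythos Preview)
Your upper bound via $\flat_{A|B}=LDR$ through a minimum vertex cut is exactly the paper's argument. The gap is in the lower bound: the direct Kronecker strategy you outline for $L$ and $R$ cannot be rescued by any choice of minimum cut. Take three vertex-disjoint paths $p_1=a_1u_1v_1b_1$, $p_2=a_2u_2b_2$, $p_3=a_3u_3b_3$, with extra edges $u_1u_2$ and $v_1u_3$. Any minimum cut picks one vertex $w_k$ from each $p_k$. For $a_1$ to have a single gateway in $V'$ one needs $w_1\in\{a_1,u_1\}$; for $b_1$ to have a single gateway one needs $w_1\in\{v_1,b_1\}$. These are incompatible, so no minimum cut makes both $L$ and $R$ tensor products simultaneously, and the extremal/uncrossing route cannot produce the ``single-gateway'' property on both sides.

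The paper instead isolates your marginalisation step as a separate monotonicity lemma ($\rank{\flat_{A'|B'}}\le\rank{\flat_{A|B}}$ for $A'\subseteq A$, $B'\subseteq B$), reducing to $|A|=|B|=\nu$ with $\nu$ disjoint paths, and then proves nonsingularity by induction on $\nu$. The key combinatorial device is to pick the path $p$ \emph{furthest from an arbitrarily chosen root} and let $v$ be the vertex of $p$ closest to the root; this guarantees a single vertex $v$ separates $p$ from all the remaining paths. Conditioning on $X_v$ yields
\[
\flat_{\hat A|\hat B}\bigl((i,f),(j,g)\bigr)=\sum_{k} F(i,k)\,H^{(k)}(f,g)\,G(k,j),
\]
where $F,G$ are the nonsingular pairwise matrices for $(a,v)$ and $(v,b)$ and each $H^{(k)}$ is the size-$(\nu-1)$ flattening under a re-rooted model (nonsingular by induction). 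Because $H^{(k)}$ varies with $k$, this is \emph{not} a Kronecker product and rank multiplicativity does not apply; the paper instead runs a short null-space argument, peeling off $F$, then each $H^{(k)}$, then $G$. Your inductive alternative is thus the right instinct, but both the ``furthest path'' selection and the non-Kronecker null-space step are essential ingredients you would still need to supply.
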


\section{Proof of Theorem~\ref{MainTheorem}}

We prove the main theorem in two steps. First, we show that $r^{\nu_T(A|B)}$ provides an {\em upper bound} for the rank of $\flat_{A|B}$; second we show that this upper bound is actually achieved.

\begin{lemma} \label{upperBound}
Let $T = (V,E)$ be a phylogeny and suppose that $A$ and $B$ are disjoint subsets of $V$. 
Then 
\[\rank{\flat_{A|B}} \leq r^{\nu_T(A|B)}.\]
\end{lemma}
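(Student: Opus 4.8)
The plan is to exploit the conditional independence supplied by a minimal vertex cut and to write $\flat_{A|B}$ as a product of two matrices whose common inner dimension is $r^{\nu_T(A|B)}$. First I would fix a vertex cut $V' \subseteq V$ attaining the minimum, so that $|V'| = \nu_T(A|B) =: m$ and, by the very definition of $\nu_T$ applied to the separating function ($0$ on $A$, $1$ on $B$), no component of $T \setminus V'$ contains both a vertex of $A$ and a vertex of $B$. Grouping the components of $T \setminus V'$ into those meeting $A$ and those meeting $B$ — which, by the separation property, are disjoint collections — shows that $A \setminus V'$ and $B \setminus V'$ are covered by separate unions of components.

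Next I would invoke Proposition~\ref{conditional}: the components $A_1,\ldots,A_k$ of $T \setminus V'$ are conditionally independent given $X_{V'}$. Since conditional independence of a family of blocks implies conditional independence of any two disjoint unions of those blocks, the grouping above yields $X_A$ and $X_B$ conditionally independent given $X_{V'}$ (the coordinates of $A$ and $B$ that happen to lie inside $V'$ are simply fixed by the conditioning and cause no trouble). Concretely, for every $g \colon V' \to [r]$ with $\pi(X_{V'}=g)>0$,
\[ \pi(X_A = f_A,\, X_B = f_B,\, X_{V'}=g) = \pi(X_A = f_A \mid X_{V'}=g)\,\pi(X_B = f_B,\, X_{V'}=g). \]

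Summing over all $g$ via the law of total probability then gives a matrix factorisation $\flat_{A|B} = PQ$, where $P$ is the $r^{|A|} \times r^{m}$ matrix with entries $P(f_A,g) = \pi(X_A=f_A \mid X_{V'}=g)$ and $Q$ is the $r^{m} \times r^{|B|}$ matrix with entries $Q(g,f_B) = \pi(X_B=f_B,\, X_{V'}=g)$. For any $g$ with $\pi(X_{V'}=g)=0$ the entire row $Q(g,\cdot)$ vanishes, so the corresponding (undefined) conditional entries of $P$ may be set to $0$ without affecting the product. Because the inner dimension is $r^m$, we conclude $\rank{\flat_{A|B}} \le r^{m} = r^{\nu_T(A|B)}$. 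Note that no positivity hypothesis is needed here: the argument rests only on the structural conditional independence of Proposition~\ref{conditional}, not on (C1)--(C3).

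I expect the one genuinely delicate step to be passing from Proposition~\ref{conditional}, which is phrased for the \emph{individual} components of $T \setminus V'$, to the statement that $X_A$ and $X_B$ are conditionally independent given $X_{V'}$ when $A$ and $B$ each span several components. The observation that makes this routine is precisely that the blocks meeting $A$ and the blocks meeting $B$ form \emph{disjoint} unions of the conditionally independent components — which is exactly what the defining property of the cut $V'$ guarantees — together with the elementary fact that independence of a family of blocks is inherited by disjoint groupings. The only other point requiring care is the minor bookkeeping around vertices in $A \cap V'$ or $B \cap V'$, which is handled by noting that conditioning on $X_{V'}=g$ merely fixes those coordinates and leaves the factorisation intact.
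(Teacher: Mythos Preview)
Your proposal is correct and follows essentially the same route as the paper: take a minimum vertex cut $C$ (your $V'$), use Proposition~\ref{conditional} to obtain conditional independence of $X_A$ and $X_B$ given $X_C$, and factor $\flat_{A|B}$ as a product of an $r^{|A|}\times r^{|C|}$ matrix of conditionals with an $r^{|C|}\times r^{|B|}$ matrix of joints. If anything, you are slightly more careful than the paper about the grouping-of-components step and the zero-probability bookkeeping, but the argument is the same.
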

\begin{proof}
Let $C$ be a minimum cardinality vertex cut of $T$  such that each component of $T \setminus C$ contains vertices from at most one of $A$ or $B$. By Proposition~\ref{conditional}  $X_{A \setminus C}$ and $X_{B \setminus C}$ are conditionally independent given $X_C$. Hence for all maps $f_A:A \rightarrow [r]$ and $f_B:B \rightarrow [r]$ we can factor $\pi(X_A = f_A, X_B = f_B)$ as 
\begin{align*}
\pi(X_A = f_A, X_B = f_B) &= \sum_{f_C} \pi(X_A = f_A, X_B = f_B|X_C = f_C ) \pi(X_C = f_C) \\
& = \sum_{f_C} \pi(X_A = f_A |X_C = f_C)  \pi(X_B = f_B, X_C = f_C) \\
\end{align*}
Define the $r^{|A|} \times r^{|C|}$ matrix $R$ by
\[R(f_A,f_C) = \pi(X_A = f_A |X_C = f_C),\]
and the $r^{|C|} \times r^{|B|}$ matrix $S$ by
\[S(f_C,f_B) = \pi(X_B = f_B, X_C = f_C).\]
Then
\[\flat_{A|B} = RS\]
and 
\[\rank{\flat_{A|B}} \leq \mathrm{rank}(R) \leq r^{|C|} = r^{\nu_T(A|B)}.\]
\end{proof}

To illustrate, consider the tree in figure~\ref{counterExample}. The sets $A$ and $B$ are indicated by filled and unfilled leaves respectively. Removing the three marked vertices separates all leaves in $A$ from leaves in $B$. Hence $\nu_T(A|B) \leq 3$ and $
\rank{\flat_{A|B}} \leq r^{\nu_T(A|B)}$. We note that the formula in \cite{Eriksson05} gives a generic rank of $r^4$ for this flattening.

\begin{figure}[ht]
\centerline{\includegraphics[width=0.6\textwidth]{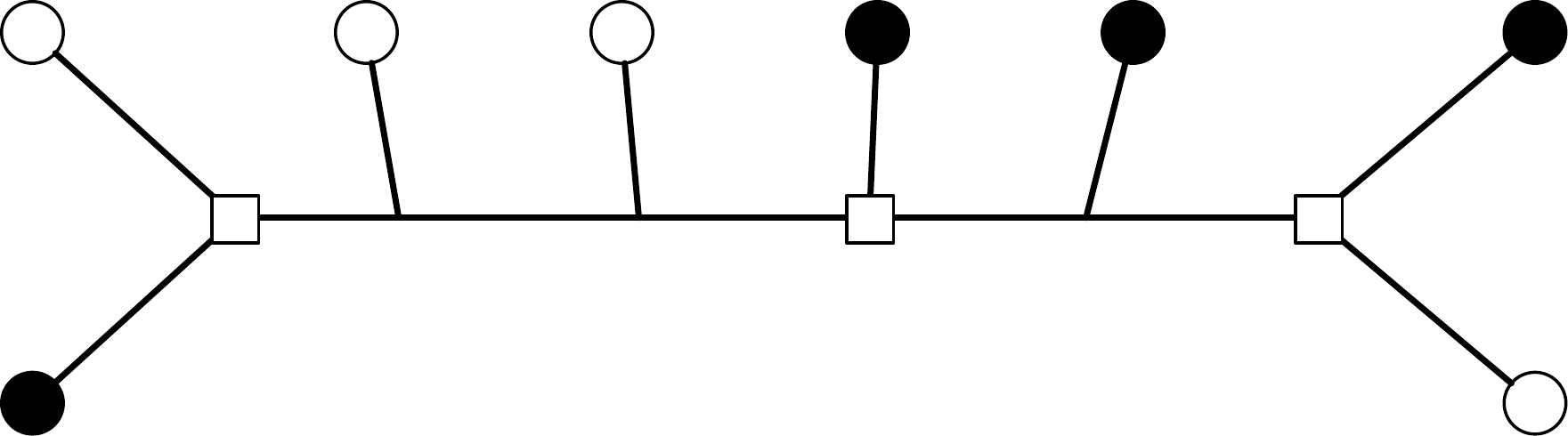}}
\caption{\label{counterExample} A counter-example for the rank formula in \cite{Eriksson05}. Let $A$ and $B$ be given by the filled and unfilled leaf nodes, respectively. The three squares mark out a vertex cut of size $3$ separating $A$ from $B$. Hence the rank of $\flat_{A|B}$ is at most $r^3$, for $r$ states. Theorem 19.5 of \cite{Eriksson05} states that the generic rank is $r^4$.}
\end{figure}

The second part of the proof of the main theorem is proving a matching lower bound for Lemma~\ref{upperBound}. We start by demonstrating that the rank of $\flat_{A|B}$ increases monotonically as a function of $A$ and $B$.

\begin{lemma} \label{monotonic}
Let $A$ and $B$ be disjoint subsets of $V$ and suppose $A' \subseteq A$ and $B' \subseteq B$. Then 
\[\rank{\flat_{A'|B'}} \leq \rank{\flat_{A|B}}.\]
\end{lemma}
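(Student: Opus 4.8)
The plan is to show that $\flat_{A'|B'}$ can be obtained from $\flat_{A|B}$ by a sequence of operations each of which cannot increase the rank, namely passing to a submatrix and then summing certain rows and columns. First I would reduce to the case where $A'$ and $B'$ each omit exactly one element of $A$ and $B$ respectively, since the general statement follows by iterating over the finitely many elements of $A\setminus A'$ and $B\setminus B'$ one at a time. By symmetry between rows and columns it then suffices to treat the single case $A' = A \setminus \{a\}$ for a fixed vertex $a$, with $B' = B$; combining the row-shrinking step with the analogous column-shrinking step gives the full claim.

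The key observation is marginalisation. A map $f_{A'} : A' \to [r]$ together with a choice of state $s \in [r]$ for the vertex $a$ determines a map $f_A : A \to [r]$, and conversely every $f_A$ arises this way. Summing the joint probabilities over the state at $a$ gives
\[
\pi(X_{A'} = f_{A'}, X_{B'} = f_{B'}) = \sum_{s \in [r]} \pi(X_A = f_A^{(s)}, X_{B} = f_{B'}),
\]
where $f_A^{(s)}$ is the extension of $f_{A'}$ sending $a \mapsto s$. In matrix terms, the row of $\flat_{A'|B'}$ indexed by $f_{A'}$ is exactly the sum of the $r$ rows of $\flat_{A|B}$ indexed by the extensions $f_A^{(s)}$. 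Hence $\flat_{A'|B'} = L \,\flat_{A|B}$ for a suitable $0/1$ aggregation matrix $L$, so $\rank{\flat_{A'|B'}} \le \rank{\flat_{A|B}}$. The same argument applied on the column side (right-multiplying by an aggregation matrix) handles the reduction from $B$ to $B'$.

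I do not expect a serious obstacle here; the result is essentially bookkeeping once marginalisation is phrased as left/right multiplication by aggregation matrices. The one point requiring a little care is the indexing: I must be explicit that the rows of $\flat_{A|B}$ are indexed by maps on all of $A$, and that grouping them according to their restriction to $A'$ partitions the row set into blocks of size $r$, so that $L$ is a well-defined matrix whose $(f_{A'}, f_A)$ entry is $1$ precisely when $f_A|_{A'} = f_{A'}$. Writing $\flat_{A'|B'} = L\,\flat_{A|B}\,M$ with $M$ the analogous column aggregation matrix then yields the inequality directly from $\rank{XY} \le \rank{X}$.
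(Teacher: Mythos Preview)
Your proposal is correct and follows essentially the same approach as the paper: both express marginalisation as $\flat_{A'|B'} = U_A \,\flat_{A|B}\, U_B^T$ for suitable $0/1$ aggregation matrices $U_A, U_B$, and conclude the rank inequality from $\rank{XY} \leq \rank{Y}$. The paper handles all omitted vertices at once rather than reducing to one vertex at a time, but this is purely stylistic.
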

\begin{proof}
For any $f_{A'}:A' \rightarrow [r]$ and $f_{B'}:B' \rightarrow [r]$ we have 
\begin{align*}
\flat_{A'|B'}(f_{A'},f_{B'}) & = \pi(X_{A'} = f_{A'}, X_{B'} = f_{B'}) \\
& = \sum_{\substack{g_A:A \rightarrow [r] \\
g_A|_{A'} = f_{A'} }} \sum_{\substack{g_B:B \rightarrow [r] \\
g_B|_{B'} = f_{B'} }} \pi(X_A = g_A, X_B = g_B) \\
& = \sum_{\substack{g_A:A \rightarrow [r] \\
g_A|_{A'} = f_{A'} }} \sum_{\substack{g_B:B \rightarrow [r] \\
g_B|_{B'} = f_{B'} }} \flat_{A|B}(g_A,g_B).
\end{align*}
Hence there is an $r^{|A'|} \times r^{|A|}$ $0\!-\!1$ (binary) matrix $U_A$ and a $r^{|B'|} \times r^{|B|}$ $0\!-\!1$ matrix $U_B$ such that 
\[\flat_{A'|B'} = U_A \flat_{A|B} U_B^T\]
and $\rank{\flat_{A'|B'}} \leq \rank{\flat_{A|B}}$.
\end{proof}

The next Lemma establishes Theorem~\ref{MainTheorem} in the extremal case that $\nu_T(A|B) = |A| = |B|$. This is where the bulk of the work proving the main theorem is carried out. 

\begin{lemma} \label{reduction}
Suppose that $T=(V,E)$, $A$ and $B$ satisfy the conditions of Theorem~\ref{MainTheorem}.
If $|A| = |B| = m$ and there are $m$ disjoint paths connecting elements of $A$ to elements of $B$ then $\rank{\flat_{A|B}} = r^m$.
\end{lemma}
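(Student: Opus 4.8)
The plan is to prove the stronger statement that the $r^m \times r^m$ matrix $\flat_{A|B}$ is non-singular (equivalently, has full rank $r^m$, matching the upper bound of Lemma~\ref{upperBound}), by induction on the number of edges of $T$. Throughout I may assume every leaf of $T$ lies in $A \cup B$: a leaf $x \notin A \cup B$ can be pruned by marginalising over $X_x$, which leaves $\flat_{A|B}$, conditions (C1)--(C3), and the system of $m$ disjoint paths unchanged while strictly decreasing the edge count (re-rooting first, via \eqref{BayesFlip}, if $x$ happens to be the root). The base case $m=1$ is exactly Proposition~\ref{PairRank}(3), which asserts that the single pairwise matrix is non-singular.

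For the inductive step I would pick a leaf, say $a_1 \in A$, with unique neighbour $v$, and split into two cases according to whether $v \in A \cup B$. If $v \notin A \cup B$ then $v$ is an interior vertex of the path $p_1$ joining $a_1$ to its partner $b_1$; conditioning $X_{a_1}$ on $X_v$ (using $X_{a_1} \perp X_{V \setminus \{a_1\}} \mid X_v$ from Proposition~\ref{conditional}) gives $\flat_{A|B} = (P_{v a_1}^T \otimes I_{r^{m-1}})\, \flat_{A^*|B}$, where $A^* = (A \setminus \{a_1\}) \cup \{v\}$. Since $P_{v a_1}$ is non-singular by Proposition~\ref{PairRank}(2), this Kronecker factor is invertible and the rank is unchanged; after pruning the now-irrelevant leaf $a_1$ the tree is strictly smaller and $A^*, B$ are still joined by $m$ disjoint paths, so the inductive hypothesis applies to $\flat_{A^*|B}$.

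The substantive case is $v \in A \cup B$, which by vertex-disjointness of the paths forces $v = b_1$, so $(a_1,b_1)$ is a pendant edge with $a_1$ a leaf. Writing $A' = A \setminus \{a_1\}$, $B' = B \setminus \{b_1\}$, and using $X_{a_1} \perp X_{V\setminus\{a_1\}} \mid X_{b_1}$, each entry factors as $\flat_{A|B}((i_1,f_{A'}),(j_1,f_{B'})) = P_{b_1 a_1}(j_1,i_1)\, \pi(X_{b_1}=j_1, X_{A'}=f_{A'}, X_{B'}=f_{B'})$. I would then prove injectivity directly: a vanishing row-combination $c(i_1,f_{A'})$, transported through the invertible $P_{b_1 a_1}$ (so $c \mapsto d$ is a bijection), yields vectors $d(j_1,\cdot)$ lying in the left null space of each column-block $\flat^{(j_1)}_{A'|B}(f_{A'},f_{B'}) = \pi(X_{A'}=f_{A'}, X_{b_1}=j_1, X_{B'}=f_{B'})$. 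It therefore suffices to show each $r^{m-1}\times r^{m-1}$ block $\flat^{(j_1)}_{A'|B}$ is non-singular, which would give $d \equiv 0$, hence $c \equiv 0$.

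The heart of the argument, and the step I expect to be the main obstacle, is establishing this block non-singularity. Up to the positive scalar $\pi(X_{b_1}=j_1)$, the block $\flat^{(j_1)}_{A'|B}$ is the flattening of the \emph{conditional} distribution given $X_{b_1}=j_1$. Conditioning on $X_{b_1}$ makes the components $C_1,\ldots,C_s$ of $T \setminus \{b_1\}$ independent (Proposition~\ref{conditional}), each carrying a tree model that again satisfies (C1)--(C3), its root distribution being the strictly positive row $P_{b_1 w}(j_1,\cdot)$ at the neighbour $w$ of $b_1$. The key structural point is that vertex-disjointness forces each remaining path $p_2,\ldots,p_m$ to lie entirely inside a single component, so within each $C_t$ the sets $A'\cap C_t$ and $B'\cap C_t$ have equal size and are joined by exactly that many disjoint paths. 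Hence $\flat^{(j_1)}_{A'|B}$ factors as a Kronecker product $\bigotimes_t Q_t$ of the component flattenings $Q_t$, each non-singular by the inductive hypothesis on the strictly smaller tree $C_t$; a Kronecker product of non-singular matrices is non-singular. This completes the induction, giving $\rank{\flat_{A|B}} = r^m$.
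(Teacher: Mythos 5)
Your proposal is correct, but it reaches the lemma by a genuinely different route from the paper. The paper inducts on the number of paths $m$: it fixes a reference vertex $\rho$, selects the path $p$ \emph{furthest} from $\rho$ with endpoints $a,b$, and writes each entry of the flattening as a three-factor sum $\sum_k F(i,k)H^{(k)}(f,g)G(k,j)$, where $F$ and $G$ are non-singular pairwise matrices and each $H^{(k)}$ is the flattening for the remaining $m$ paths on the \emph{same} tree, merely re-rooted at $w$ with root distribution $P_{vw}(k,\cdot)$ (this is where (C3) enters); a null-space argument then peels off $F$, the $H^{(k)}$, and $G$ in turn. You instead induct on the edge count, first contracting leaves into their neighbours via invertible Kronecker factors $P_{va_1}^T\otimes I$ until some path degenerates to a pendant edge $\{a_1,b_1\}$, and then make the decisive move of deleting the cut vertex $b_1$ and factoring the conditional flattening as a Kronecker product $\bigotimes_t Q_t$ over the components of $T\setminus\{b_1\}$, each component inheriting a valid model with root distribution $P_{b_1w}(j_1,\cdot)$ (again (C3)) and its own subfamily of disjoint paths. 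The paper's argument buys a fixed ambient tree and a single clean algebraic identity, at the cost of the slightly delicate ``furthest path'' selection that guarantees all other paths lie on one side of $v$; your argument requires more bookkeeping (iterated pruning, re-rooting, verifying that the Bayes-flipped transition matrices retain (C1)--(C3), and handling components with no marked vertices), but it makes the source of the rank multiplicativity completely transparent via the tensor-product structure, and it localises the induction to strictly smaller trees rather than smaller path systems. Both proofs ultimately rest on the same three ingredients: Proposition~\ref{PairRank}, Proposition~\ref{conditional}, and the positivity condition (C3) ensuring the conditional root distributions remain admissible.
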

\begin{proof}
We prove the result by induction on $m$. 

For the base case suppose that $m=1$, $A = \{a\}$ and  $B = \{b\}$. Then 
\[\flat_{A|B}(i,j) = \pi(X_a = i,X_B = j)\]
which is full rank $r = r^{\nu_T(A|B)}$ by Proposition~\ref{PairRank}.

Next, assume that the result holds when $|A| = |B| = m$. Suppose that there is a collection of $m+1$ vertex disjoint paths connecting elements of $A'$ and $B'$, where $|A'| = |B'| = m+1$. Fix a vertex $\rho \in V$ and let $p$ be the path furthest away from $\rho$. Suppose that this path goes from $a \in A'$ to $b \in B'$. Let $v \in p$ be the vertex on $p$ closest to $\rho$ and let $e = \{v,w\}$ be the first edge on the path from $v$ to $\rho$. If there is a path $p'$ in the same component of $T \setminus \{e\}$ then $v$ is on the path from $p'$ to $\rho$, contradicting the choice of $p$.

Let $A = A'\setminus \{a\}$ and $B = B' \setminus \{b\}$.  Define the matrices $F,G$ by 
\begin{align*}
F(i,k) &= \pi(X_a = i|X_v = k) \\
G(k,j) &= \pi(X_v = k,X_b = j)
\intertext{and for each $k \in [n]$  define the $r^m\times r^m$ matrix}
H^{(k)}(f,g) & = \pi(X_A = f,X_B=g|X_v = k).
\end{align*}
Note that $F$ and $G$ are both non-singular, by Lemma~\ref{PairRank}. Furthermore, for each $k$ the matrix $H^{(k)}$ equals the flattening matrix $\flat_{A|B}$ but with respect to root $\rho=w$ and  root distribution  
\[\pi_\rho(i) = P_{vw}(k,i).\]
Hence by the induction hypothesis, $H^{(k)}$ also has full rank.

Suppose that $\hat{f}:\hat{A} \rightarrow [r]$ and $\hat{g}:\hat{B} \rightarrow [r]$. Let $i = \hat{f}(a)$,  $j = \hat{g}(b)$  and let $f,g$ be the  restrictions of $\hat{f}$ and $\hat{g}$ to $A$ and $B$. Then 
\begin{align*}
\flat_{\hat{A} |\hat{B}} (\hat{f},\hat{g}) & = \sum_{k=1}^r F(i,k) H^{(k)}(f,g) G(k,j).
\end{align*}
We will suppose that the null space of $\flat_{\hat{A} |\hat{B}} (\hat{f},\hat{g})$ contains no non-zero vectors. To this end, 
let $x$ be a vector indexed by pairs $(g,j)$ with $g:B \rightarrow [r]$ and $j \in [r]$ such that
\begin{align*}
\sum_g \sum_{j=1}^r \sum_{k=1}^r F(i,k) H^{(k)}(f,g) G(k,j) x(g,j) & = 0
\intertext{for all $f:A \rightarrow [r]$ and $i \in [r]$. Rearranging, we have }
\sum_{k=1}^r F(i,k)  \left[ \sum_g \sum_{j=1}^r  H^{(k)}(f,g) G(k,j) x(g,j) \right] & = 0
\intertext{and since $F$ is non-singular, }
\sum_g \sum_{j=1}^r  H^{(k)}(f,g) G(k,j) x(g,j) & = 0
\end{align*}
for all $f$ and $k$. Define the vector $y$ by
\begin{equation}
y(g,k) = \sum_{j=1}^r G(k,j) x(g,j) \label{yform}
\end{equation}
so that for all $k$,
\begin{align*}
\sum_g  H^{(k)}(f,g) y(g,k) & = 0.
\end{align*}
As $H^{(k)}$ is non-singular, $y(g,k) = 0$ for all $g,k$, from which \eqref{yform} and the fact that $G$ is non-singular gives $x = 0$.  We have shown that the null space of $\flat_{\hat{A}|\hat{B}}$ is trivial, proving the lemma.
\end{proof}

We can now prove the main theorem.

\begin{proof} (Theorem~\ref{MainTheorem})
Suppose that $A$ and $B$ are disjoint subsets of $V$. From Lemma~\ref{upperBound} we have \[\rank{\flat_{A|B}} \leq r^{\nu_T(A|B)}.\]

By Proposition~\ref{Menger} there are $\nu_T(A|B)$ vertex disjoint paths connecting vertices in $A$ to vertices in $B$. Let $A'$ and $B'$ be the endpoints of these paths, so $|A'| = |B'| = \nu_T(A'|B') = \nu_T(A|B)$. By Lemma~\ref{reduction} we have $\rank{\flat_{A'|B'}} = r^{\nu_T(A|B)}$ and by Lemma~\ref{monotonic} we have 
\[\rank{\flat_{A|B}} \geq \rank{\flat_{A'|B'}} = r^{\nu_T(A|B)}.\]

The second part of the theorem now follows from Corollary~\ref{vertexEdge}.
\end{proof}

We note that none of the conditions (C1)---(C3) on the root distribution and transition matrices can be eliminated. For example, consider the four taxa tree 
in Figure~\ref{firstFlat}, a case studied in detail by \cite{AllmanRhodes06}. The joint probability distribution for $X_1,X_2,X_3,X_4$ is
\[\pi(X_1=a,X_2=b,X_3=c,X_4=d)  = \sum_i \sum_j P_{u1}(i,a) P_{u2}(i,b) \pi_{\rho}(i) P_{uv}(i,j) P_{v3}(j,c) P_{v4}(j,d) .\]
Gather terms, we obtain a decomposition
\[ \flat_{\{1,3\}|\{2,4\}} = UDV \]
where $U$ and $V$ are $r^2 \times r^2$ matrices and $D$ is a diagonal matrix with diagonal entries
\[D_{ij;ij} = \pi_{\rho}(i) P_{uv}(i,j).\]
From here we see that if there is $i$ such that $\pi_\rho(i) = 0$, or $ij$ such that $P_{uv}(i,j) = 0$, then 
\[\rank{\flat_{\{1,3\}|\{2,4\}} } < r^2.\]


\bibliographystyle{plain}
\bibliography{Refs}

\begin{thebibliography}{10}

\bibitem{AllmanJarvisEtal13}
Elizabeth~S Allman, Peter~D Jarvis, John~A Rhodes, and Jeremy~G Sumner.
\newblock Tensor rank, invariants, inequalities, and applications.
\newblock {\em SIAM Journal on Matrix Analysis and Applications},
  34(3):1014--1045, 2013.

\bibitem{AllmanKubatkoEtal17}
Elizabeth~S. Allman, Laura~S. Kubatko, and John~A. Rhodes.
\newblock {Split Scores: A Tool to Quantify Phylogenetic Signal in Genome-Scale
  Data}.
\newblock {\em Systematic Biology}, 66(4):620--636, 01 2017.

\bibitem{AllmanRhodes05}
Elizabeth~S Allman and Rhodes.
\newblock Phylogenetic invariants.
\newblock {\em Mathematics of evolution and phylogeny. Oxford University Press,
  Oxford, UK}, pages 108--146, 2005.

\bibitem{AllmanRhodes03}
Elizabeth~S Allman and John~A Rhodes.
\newblock Phylogenetic invariants for the general markov model of sequence
  mutation.
\newblock {\em Mathematical biosciences}, 186(2):113--144, 2003.

\bibitem{AllmanRhodes06}
Elizabeth~S Allman and John~A Rhodes.
\newblock The identifiability of tree topology for phylogenetic models,
  including covarion and mixture models.
\newblock {\em Journal of Computational Biology}, 13(5):1101--1113, 2006.

\bibitem{AllmanRhodes07}
Elizabeth~S Allman and John~A Rhodes.
\newblock Identifying evolutionary trees and substitution parameters for the
  general {M}arkov model with invariable sites.
\newblock {\em Mathematical Biosciences}, 211(1):18--33, 2007.

\bibitem{AllmanRhodes08}
Elizabeth~S Allman and John~A Rhodes.
\newblock Phylogenetic ideals and varieties for the general {Markov} model.
\newblock {\em Advances in Applied Mathematics}, 40(2):127--148, 2008.

\bibitem{AllmanRhodes09}
Elizabeth~S Allman and John~A Rhodes.
\newblock The identifiability of covarion models in phylogenetics.
\newblock {\em IEEE/ACM transactions on computational biology and
  bioinformatics}, 6(1):76--88, 2009.

\bibitem{AllmanRhodesEtal14}
Elizabeth~S Allman, John~A Rhodes, and Amelia Taylor.
\newblock A semialgebraic description of the general {M}arkov model on
  phylogenetic trees.
\newblock {\em SIAM Journal on Discrete Mathematics}, 28(2):736--755, 2014.

\bibitem{BryantGaltierEtal05}
David Bryant, Nicolas Galtier, and Marie-Anne Poursat.
\newblock Likelihood calculation in molecular phylogenetics.
\newblock {\em Mathematics of evolution and phylogeny. Oxford University Press,
  Oxford, UK}, pages 33--62, 2005.

\bibitem{CasanellasFernandez-Sanchez11}
Marta Casanellas and Jes\'us Fern\'andez-S\'anchez.
\newblock Relevant phylogenetic invariants of evolutionary models.
\newblock {\em Journal de Math{\'e}matiques Pures et Appliqu{\'e}es},
  96(3):207--229, 2011.

\bibitem{CasanellasFernandez-SanchezEtal21}
Marta Casanellas, Jes\'us Fern\'andez-S\'anchez, and Marina Garrote-L\'opez.
\newblock Saq: semi-algebraic quartet reconstruction.
\newblock {\em IEEE/ACM Transactions on Computational Biology and
  Bioinformatics}, 2021.

\bibitem{ChifmanKubatko14}
Julia Chifman and Laura Kubatko.
\newblock Quartet inference from snp data under the coalescent model.
\newblock {\em Bioinformatics}, 30(23):3317--3324, 2014.

\bibitem{Eriksson05}
Nicholas Eriksson.
\newblock {T}ree construction using singular value decomposition.
\newblock In L.~Pachter and B.Editors Sturmfels, editors, {\em {A}lgebraic
  {S}tatistics for {C}omputational {B}iology}, pages 347--358. {C}ambridge
  {U}niversity {P}ress, 2005.

\bibitem{Felsenstein04}
Joseph Felsenstein.
\newblock {\em Inferring phylogenies}, volume~2.
\newblock Sinauer associates Sunderland, MA, 2004.

\bibitem{Fernandez-SanchezCasanellas16}
Jes\'us Fern\'andez-S\'anchez and Marta Casanellas.
\newblock Invariant versus classical quartet inference when evolution is
  heterogeneous across sites and lineages.
\newblock {\em Systematic biology}, 65(2):280--291, 2016.

\bibitem{PachterSturmfels04}
Lior Pachter and Bernd Sturmfels.
\newblock Tropical geometry of statistical models.
\newblock {\em Proceedings of the National Academy of Sciences},
  101(46):16132--16137, 2004.

\bibitem{RhodesSullivant12}
John~A Rhodes and Seth Sullivant.
\newblock Identifiability of large phylogenetic mixture models.
\newblock {\em Bulletin of mathematical biology}, 74(1):212--231, 2012.

\bibitem{SempleSteel03}
Charles Semple, Mike Steel, et~al.
\newblock {\em Phylogenetics}, volume~24.
\newblock Oxford University Press on Demand, 2003.

\bibitem{SteelMA1994Rtws}
M~A Steel, L~A Sz{\'e}kely, and M~D Hendy.
\newblock {R}econstructing {T}rees {W}hen {S}equence {S}ites {E}volve at
  {V}ariable {R}ates.
\newblock {\em {J}ournal of {C}omputational {B}iology}, 1(2):153--163, 1994.

\bibitem{Steel93}
Michael~A Steel.
\newblock Distributions on bicoloured binary trees arising from the principle
  of parsimony.
\newblock {\em Discrete applied mathematics}, 41(3):245--261, 1993.

\bibitem{Steel16}
Mike Steel.
\newblock {\em Phylogeny: discrete and random processes in evolution}.
\newblock SIAM, 2016.

\end{thebibliography}

\end{document}